 \documentclass[journal]{IEEEtran}

\usepackage{algorithmic}

\usepackage[T1]{fontenc}

\usepackage{graphicx}
\usepackage[noadjust]{cite}
\usepackage{mcite}
\usepackage{amsfonts,helvet}
\usepackage{fancyhdr}
\usepackage{threeparttable}
\usepackage{epsf,epsfig}
\usepackage{amsthm}
\usepackage{amsmath}
\usepackage{siunitx}
\usepackage{amssymb}
\usepackage{dsfont}
\usepackage{subfigure}
\usepackage{color}
\usepackage{enumerate}
\usepackage{hyperref}
\usepackage{cancel}
\usepackage{bbm}
\usepackage{dsfont}
\usepackage[subnum]{cases}
\usepackage{adjustbox}
\usepackage[linesnumbered,ruled,noend]{algorithm2e}
\usepackage{multicol}
\usepackage[english]{babel}
\usepackage{enumitem}


\newtheorem{theorem}{Theorem}

\newtheorem{corollary}{Corollary}

\newmuskip\pFqmuskip


\def\be{{\bf e}}
\def\bff{{\bf f}}

\def\bh{{\bf h}}

\def\bq{{\bf q}}

\def\bs{{\bf s}}

\def\bw{{\bf w}}
\def\bx{{\bf x}}


\def\bC{{\bf C}}
\def\bD{{\bf D}}
\def\bE{{\bf E}}

\def\bH{{\bf H}}
\def\bI{{\bf I}}

\def\bK{{\bf K}}

\def\bM{{\bf M}}

\def\bW{{\bf W}}

\def\bZ{{\bf Z}}


\def\cC{\mbox{$\mathcal{C}$}}

\def\cL{\mbox{$\mathcal{L}$}}

\def\cN{\mbox{$\mathcal{N}$}}


\def\bbC{\mbox{$\mathbb{C}$}}

\def\bbE{\mbox{$\mathbb{E}$}}

\def\bbR{\mbox{$\mathbb{R}$}}

\def\ubM{\mbox{${\bf{M}}$}}

\def\ubW{\mbox{${\bf{W}}$}}

\def\btau{\boldsymbol{\tau}}
\def\bSigma{\boldsymbol{\Sigma}}

\usepackage{array}

\makeatletter
\newcommand{\thickhline}{%
    \noalign {\ifnum 0=`}\fi \hrule height 1pt
    \futurelet \reserved@a \@xhline
}
\newcolumntype{"}{@{\hskip\tabcolsep\vrule width 1pt\hskip\tabcolsep}}

\setcounter{page}{1}
\setcounter{proposition}{0}
\IEEEoverridecommandlockouts

\title{
Coordinated Beamforming in Quantized Massive MIMO Systems with Per-Antenna Constraints
}
\author{
Yunseong Cho, {\it Graduate Student Member, IEEE,} Jinseok Choi, {\it Member, IEEE,}  \\
and Brian L. Evans, {\it Fellow, IEEE} \thanks{

Y. Cho and B. L. Evans are with the Wireless Networking and Communication Group (WNCG), Dept. of Electrical and Computer Engineering, The University of Texas at Austin, Austin, TX 78712 USA. (e-mail: 
yscho@utexas.edu, bevans@ece.utexas.edu).
J. Choi is with Dept. of Electrical and Computer Engineering, Ulsan National Institute of Science and Technology (UNIST), Ulsan, South Korea
(e-mail:jinseokchoi@unist.ac.kr).
}
}

\begin{document}
\maketitle
\begin{abstract}
In this work, we present a solution for coordinated beamforming for large-scale downlink (DL) communication systems with low-resolution data converters when employing a per-antenna power constraint that limits the maximum antenna power to alleviate hardware cost. 
To this end, we formulate and solve the antenna power minimax problem for the coarsely quantized DL system with target signal-to-interference-plus-noise ratio requirements. 
We show that the associated Lagrangian dual with uncertain noise covariance matrices achieves zero duality gap and that the dual solution can be used to obtain the primal DL solution.
Using strong duality, we propose an iterative algorithm to determine the optimal dual solution, which is used to compute the optimal DL beamformer.
We further update the noise covariance matrices using the optimal DL solution with an associated subgradient and perform projection onto the feasible domain.
Through simulation, we evaluate the proposed method in maximum antenna power consumption and peak-to-average power ratio which are directly related to hardware efficiency.
\end{abstract}
\begin{IEEEkeywords}
Transmit power minimax problem, beamforming, low-resolution quantizers, per-antenna power constraint.
\end{IEEEkeywords}

\section{Introduction}
\label{sec:intro}
Massive multiple-input-multiple-output (MIMO) systems have drawn attention for fifth-generation wireless communication systems \cite{marzetta2010noncooperative}.
However, a large number of analog-to-digital converters (ADCs) and digital-to-analog converters (DACs) connected to the antennas require prohibitively high power consumption.
Accordingly, employing low-resolution quantizers has gathered momentum as a power-efficient solution \cite{choi2016near,studer2016quantized,choi2019robust,cho2019one,choi2017resolution,  choi2019two,choi2020quantized,nguyen2021linear,yuan2020toward,Park2021Construction}. 
The design of multicell systems should take into account intra-cell and inter-cell interferers as well as the quantization error.
Moreover, for a realistic deployment, a per-antenna power constraint that limits the power of each antenna is desirable by allowing the system to operate with more energy-efficient power amplifiers \cite{yu2007transmitter, dahrouj2010coordinated}.


In recent years, low-resolution converters have been introduced in many communication systems and their corresponding algorithms \cite{choi2016near, studer2016quantized, choi2019robust,cho2019one,choi2020quantized,choi2019two,nguyen2021linear,yuan2020toward,Park2021Construction}.
The authors in \cite{choi2019robust} employed artificial noise to precisely learn the likelihood probabilities for maximum likelihood detection under one-bit ADCs.
In \cite{cho2019one}, computation of soft metrics with one-bit observations and its application to channel coding were presented.
A hybrid MIMO architecture with resolution-adaptive ADCs for
millimeter wave communication was developed in \cite{choi2017resolution}.
Linear estimation and model-based deep neural networks approach were combined in \cite{nguyen2021linear} for data detection with one-bit ADCs.
Considering different traffic load requirements of IoT devices, the authors in \cite{yuan2020toward} developed a grant-free access scheme for massive MIMO systems with mixed-ADC access points (APs).


\begin{figure}[!t]\centering
	\includegraphics[width=0.95\columnwidth]{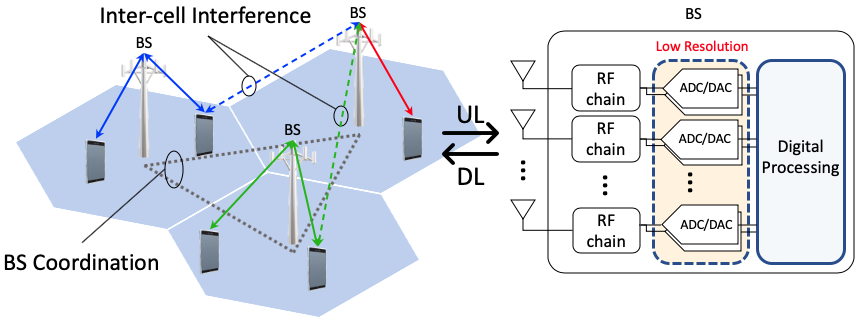}
	\vspace{-1.0em}
	\caption{Multicell multiuser-MIMO configuration with low-resolution ADCs and DACs at the base station (BS).} 
	\label{fig:system}
	\vspace{-1.0em}
\end{figure}

Coordinated multipoint (CoMP) 
has been explored to increase throughput, coverage and reliability \cite{rashid1998joint,rashid1998transmit,dahrouj2010coordinated,choi2020quantized}.
In \cite{rashid1998joint}, CoMP beamforming (BF) and power control (PC) in uplink (UL) were presented.
Considering downlink (DL) as a virtual UL, UL-DL CoMP BF and PC were further proposed in \cite{rashid1998transmit} in a distributed manner using local measurements.
The authors in \cite{dahrouj2010coordinated} showed Lagrangian-based duality for multiuser MIMO systems and proposed a distributed method to lower computational load on users and BSs.
CoMP BF and PC were also studied for coarsely quanitized massive MIMO systems, and their closed-form solution and extension to a wideband scenario were investigated \cite{choi2020quantized}. 
The synergy between massive MIMO and CoMP is studied with higher throughput \cite{jungnickel2014role}.
%

In this paper, we design CoMP-based BF with the per-antenna power constraint in coarsely quantized large-scale MIMO systems.
We first formulate the DL antenna power minimax problem with individual signal-to-interference-plus-noise-ratio (SINR) constraints.
We then derive the Lagrangian dual of the DL problem which can be considered as a virtual UL problem with uncertain noise covariance matrices. 
By transforming the DL problem to a strictly feasible second-order cone program (SOCP), we derive strong duality between the DL problem and its dual.
Leveraging strong duality, we propose an iterative algorithm to solve the dual in a distributed manner for fixed noise covariance matrices. 
The solutions of the dual are used to obtain an optimal DL BF.
The optimal DL solutions update the noise covariance matrices which are then projected onto the feasible constraint set.

{\it Notation}: $\bf{A}$ is a matrix and $\bf{a}$ is a column vector. 
$\mathbf{A}^{H}$ and $\mathbf{A}^T$  denote conjugate transpose and transpose. 
$[{\bf A}]_{i,:}$ and $ \mathbf{a}_i$ indicate the $i$th row and column vectors of $\bf A$. 
We denote $a_{i,j}$ as the $\{i,j\}$th element of $\bf A$ and $a_{i}$ as the $i$th element of $\bf a$. 
$\mathcal{CN}(\mu, \sigma^2)$ is a complex Gaussian distribution with mean $\mu$ and variance $\sigma^2$. 
The diagonal matrix $\rm diag(\bf A)$ has $\{a_{i,i}\}$ as its diagonal entries, and $\rm diag (\bf a)$ or $\rm diag({\bf a}^T)$ has $\{a_i\}$ as its diagonal entries. 
A block diagonal matrix is presented as ${\rm blkdiag}({\bf A}_1, \dots,{\bf A}_{N})$. 
${\bf I}_N$ is a $N\times N$ identity matrix and ${\bf 1}_N$ is a $N \times 1$ ones vector.
$\otimes$ represents Kronecker product.

\section{System Model}
\label{sec:sys_model}

We consider a multicell multiuser-MIMO network with $N_c$ cells, $N_{u}$ single-antenna users per cell.
BS$_i$ represents the BS with $N_b$ antennas in cell $i$ that serves $N_u$ dedicated users.
We assume that the BSs distributed over $N_c$ cells cooperate and employ low-resolution DACs.

\subsection{DL System Models with Low-Resolution DACs}

Let $\bs_i \!\in\! \bbC^{N_u}$ denote the symbols for $N_u$ users in cell $i$ and $\bW_i\!=\!\left[\bw_{i,1},\ldots,\bw_{i,N_c}\right] \!\in\! \bbC^{N_b \times N_u} $ collects the associated precoders. 
We define the precoded signals as $\bx_i = \bW_i\bs_i$.
Upon generating $\bx_i$, the signal is treated by the low-resolution DACs with $b$ quantization bits.
For analytic tractability, we adopt the AQNM~\cite{choi2020quantized, orhan2015low} that delivers a linear approximation of the quantization process derived from assuming a scalar minimum-mean-squared-error (MMSE) quantizer.
Under the AQNM, the quantized signal is modeled as
\begin{align}
    \bx_{{\rm q},i} = \alpha \bx_i + \bq_i
\end{align}
where $\bq_i \sim \cC\cN({\bf 0}, \bC_{{\bf q}_i{\bf q}_i})$ is the quantization noise vector at BS$_i$ with its covariance matrix of \cite{choi2020quantized}
\begin{align}
    \label{eq:Cqq_dl}
    \bC_{{\bf q}_i{\bf q}_i} = \alpha\beta{\rm diag}\big(\bW_i\bW_i^H\big).
\end{align}
The quantization gain $\alpha$ is defined as $\alpha \!=\! 1- \beta$ where $\beta$'s are listed in Table 1 in \cite{fan2015uplink} for $b \leq 5$ assuming $\bs_i\!\sim\!\mathcal{CN}({\bf 0}_{N_u},{\bf I}_{N_u}), \forall i\in\{1,\ldots,N_c\}$. 

Due to the broadcast channel, users in cell $i$ receive signals from all BSs. 
By defining $\bH_{i,j}$ as the channel between BS$_i$ and users in cell $j$, the received signal at user $u$ in cell $i$ is
\begin{align}
    y_{i,u} \!=\!\ &\alpha {\bf h}_{i,i,u}^H \!{\bf w}_{i,u} {s}_{i,u} \nonumber \\ 
    &+ \alpha\!\!\!\!\!\! \sum_{(j,v)\neq (i,u)}^{(N_c,N_u)} \!\!\!\!{\bf h}_{j,i,v}^H  \!{\bf w}_{j,v} {s}_{j,v} \!+\! \sum_{j=1}^{N_c} {\bf h}_{j,i,u}^H \bq_j \!+\!  n_{i,u} \label{eq:y_dl}
\end{align}
where the DL channel between BS$_j$ and user $u$ in cell $i$ is $\bh^H_{j,i,u}$, and $n_{i,u}\sim\mathcal{CN}(0, 1)$.
Based on \eqref{eq:Cqq_dl} and \eqref{eq:y_dl} the DL SINR for user $u$ in cell $i$ is expressed as 
\begin{align}
    \label{eq:sinr_dl_ofdm}
    &\Gamma_{i,u}=
    \frac{\alpha^2|\bh_{i,i,u}^H\bw_{i,u}|^2}{\alpha^2 \sum_{(j,v) \neq (i,u)}^{N_c, N_u}\!|\bh_{j,i,u}^H\bw_{j,v}|^2 + {\rm Q}_{i,u} +\! \sigma^2}
\end{align}
with the quantization noise terms of
\begin{align}
	\label{eq:quantizationnoise}
	{\rm Q}_{i,u} = \sum_{j=1}^{N_c}\bh_{j,i,u}^H\bC_{\bq_j,\bq_j}\bh_{j,i,u}.
\end{align}
Finally, the DL transmit power minimization problem with per-antenna power constraints is formulated  as 
\begin{align}
    \label{eq:dl_problem}
   \mathop{{\text{minimize}}}_{{\bf w}_{i,u},\ p_{\rm 0}}& \;\;  p_{\rm 0}\\
   \label{eq:dl_sinr}
   {\text{subject to}} & \;\; \Gamma_{i,u} \geq \gamma_{i,u}\\
   \label{eq:per_antenna_const}
   &\  \Big[\bbE[\bx_{{\rm q},i}\bx_{{\rm q},i}^H]\Big]_{m,m}\leq p_{\rm 0} \\
   \nonumber
   &\;\;\forall i,u,m.
\end{align}
Since directly solving the above problem is difficult, we first derive a Lagrangian dual and solve the problem in the dual space with a more efficient solver.

\subsection{Lagrangian Dual Problem}

Let us collect Lagrangian multipliers as  ${\pmb \Lambda} = {\rm blkdiag}\big({\pmb \Lambda}_{1},\dots, {\pmb \Lambda}_{N_c}\big)$ where ${\pmb \Lambda}_i = {\rm diag}\big(\lambda_{i,1},\dots,\lambda_{i,N_u}\big)$.
In Theorem~\ref{thm:duality_ofdm}, a virtual UL problem is derived as a dual problem of \eqref{eq:dl_problem}.
\begin{theorem}[Duality]
    \label{thm:duality_ofdm}
    The Lagrangian dual problem of the DL problem in \eqref{eq:dl_problem} is equivalent to 
    \begin{align}
        \label{eq:dual_problem}
        \max_{{\bf D}_i }\min_{\lambda_{i,u}}& \;\;  \sum_{i,u}^{N_c,N_u}\lambda_{i,u}\sigma^2\\
        \label{eq:dual_sinr_const}
        {\text{ \rm subject to}} & \;\; \max_{\bff_{i,u}}\hat{\Gamma}_{i,u} \geq \gamma_{i,u},\\
        \label{eq:D_const1}
        &\;\;  \bD_i \succeq 0, \ \bD_i\in \bbR^{N_b\times N_b}: \text{\rm diagonal}, \\
        \label{eq:D_const2}
        &\;\;  {\rm tr}(\bD_i) \leq N_b \quad \forall i,u 
   \end{align}
   with UL SINR of
   \begin{align}
        \label{eq:sinr_ul_ofdm}
        \hat{\Gamma}_{i,u}\! =\! \frac{ \alpha^2\lambda_{i,u}|\bff_{i,u}^H\bh_{i,i,u}|^2 }{ \bff_{i,u}^H\bZ_{i,u}\bff_{i,u}} 
    \end{align}
    where
    \begin{align} 
    	\nonumber
    	\bZ_{i,u}= &\alpha^2\!\!\!\!\!\!\! \sum_{(j,v)\neq (i,u)} 
    	\!\!\!\!\!\!\!\lambda_{j,v} \bh_{i,j,v}\bh_{i,j,v}^H \\
    	&+ \alpha^2 {\bD_i}+  \alpha\beta{\rm diag}(\bH_i{\pmb \Lambda} \bH_i^H\!+\! \bD_i).
    \end{align}
\begin{proof}
    Considering $\bff_{i,u}$ as a combiner for user $u$ in cell $i$, we let $\bff_{i,u}$ be the MMSE equalizer as
    \begin{align}
    	\label{eq:mmse}
    	\bff_{i,u} =  &\bZ_{i,u}^{-1}\bh_{i,i,u}.
    \end{align}
    With the MMSE combiner that maximizes $ \hat{\Gamma}_{i,u}$, \eqref{eq:dual_sinr_const} is simplified as
    \begin{align}
        \label{eq:dual_problem_reduced}
        \max_{{\bD}_i }\min_{\lambda_{i,u}}& \;\;  \sum_{i,u}^{N_c,N_u}\lambda_{i,u}\sigma^2\\
        \label{eq:dual_sinr_const_reduced}
        {\text{ \rm subject to}} & \;\; {\bf K}_{i}({\pmb \Lambda})\preceq  \alpha \left(1 + \frac{1}{\gamma_{i,u}}\right)\lambda_{i,u}  \bh_{i,i,u}\bh_{i,i,u}^H,\\
        \nonumber
        &\;\;  \bD_i \succeq 0, \ \bD_i\in \bbR^{N_b\times N_b}: \text{\rm diagonal}, \\
        \nonumber
        &\;\;  {\rm tr}(\bD_i) \leq N_b \quad \forall i,u 
   	\end{align}
    where 
    \begin{align}
        {\bf K}_{i}({\pmb \Lambda}) \!=\! {\bf D}_{i} \! + \!\alpha\! \sum_{j,v} \!\lambda_{j,v}\bh_{i,j,v}\bh_{i,j,v}^H\! 
        +\! \beta{\rm diag}\Big(\bH_i {\pmb \Lambda}\bH_i^H\Big),
        \label{eq:K_matrix}
    \end{align}
    which is the covariance matrix of the received signal at BS$_i$.
   
    Now, we show that \eqref{eq:dual_problem_reduced} is equivalent to the Lagrangian dual of \eqref{eq:dl_problem}.
   The per-antenna constraint in \eqref{eq:per_antenna_const}  is rewritten as
    \begin{align}
    	 \Big[\bbE[\bx_{{\rm q},i}\bx_{{\rm q},i}^H]\Big]_{m,m} 
    	  = \left[\alpha \bW_i \bW_i^H \right]_{m,m}.
    \end{align}
    We replace the objective function in \eqref{eq:dl_problem} with $N_cN_b p_{\rm 0}$ that does not affect the problem.
    With Lagrangian multipliers $\lambda_{i,u}$ and $\mu_{i,m}$, the Lagrangian of \eqref{eq:dl_problem} is then given as 
    \begin{align}
        \nonumber
         {\cL} \!=\!  & \, N_cN_b p_{\rm 0} \!-\!\! \sum_{i,u}\!\lambda_{i,u}\Bigg(\!\frac{\alpha^2|\bw_{i,u}^H\bh_{i,i,u}|^2}{\gamma_{i,u}}
         \!-\alpha^2 \!\!\!\!\!\!\!\! \sum_{(j,v) \neq (i,u)}^{N_c, N_u}\!\!\!\!\!\!\!|\bh_{j,i,u}^H\bw_{j,v}|^2 \\
         &\!-\! {\rm Q}_{i,u} \!-\! \sigma^2\!\Bigg)
          +  \sum_{i,m}\mu_{i,m}
         \label{eq:lagrangian}
       \left(\left[\alpha \bW_i \bW_i^H \right]_{m,m} - p_{\rm 0}   \right).
    \end{align}
    First, from the proof of Theorem 2 in \cite{choi2020quantized}, we have $\sum_{i,u}\lambda_{i,u}{\rm Q}_{i,u}$ in \eqref{eq:lagrangian} rewritten as
    \begin{align}
        \label{eq:thm_proof1}
        &\sum_{i,u}\lambda_{i,u}{\rm Q}_{i,u} = \alpha\beta \sum_{i,u}\!\bw_{i,u}^H{\rm diag}\left(\bH_i
        {\pmb \Lambda}\bH_i^H\!\right)\!
        \bw_{i,u}.
    \end{align}
    Next, we let $\bD_i = {\rm diag}(\mu_{i,1}, \dots, \mu_{i,N_b})$. 
    We can cast $\sum_{i,m}\mu_{i,m}\left[\alpha \bW_i \bW_i^H \right]_{m,m}$  in \eqref{eq:lagrangian} to
    \begin{align}
    	\nonumber
    	\sum_{i,m}\mu_{i,m}\bigg[\sum_{u}\bw_{i,u}\bw_{i,u}^H\bigg]_{m,m} \!&=  \sum_{i,m}\sum_{u} w^*_{i,u,m}\mu_{i,m}w_{i,u,m}\\
    	\label{eq:thm_proof2}
    	& = \sum_{i,u}\bw^H_{i,u}\bD_i\bw_{i,u} 
   	\end{align}
    where $w_{i,u,m}$ denotes $m$th element of $\bw_{i,u}$.
    Lastly, $\sum_{i,m}\mu_{i,m}p_{\rm 0}$ can be rewritten as 
    \begin{align}
    	\label{eq:thm_proof3}
    	p_{\rm 0}\sum_{i,m}\mu_{i,m} = p_{\rm 0}\sum_{i=1}^{N_c}{\rm tr}(\bD_i).
    \end{align}

    Applying \eqref{eq:thm_proof1}, \eqref{eq:thm_proof2}, and \eqref{eq:thm_proof3} to the Lagrangian in \eqref{eq:lagrangian}, we finally have the reformed Lagrangian as
    	\begin{align}
        	{\cL} \!=\! &\sum_{i,u}\lambda_{i,u}\sigma^2 \!-\! p_{\rm 0}\!\sum_i\!\big[{\rm tr}(\bD_i)\!-\!N_b\big]\! \label{eq:lagrangian_reform}
        	\\
        	&+\!\sum_{i,u}\!\bw_{i,u}^H\!\bigg(\alpha\bD_{i}\! -\! \alpha^2\!\left(\!1\!+\!\frac{1}{\gamma_{i,u}}\!\right)\!\lambda_{i,u}\bh_{i,i,u}\bh_{i,i,u}^H \nonumber
        	\\
        	&+\alpha^2\!\sum_{j,v}\lambda_{j,v}\bh_{i,j,v}\bh_{i,j,v}^H\!+\!\alpha\beta{\rm diag}\!\left(\bH_i\,{\pmb\Lambda}\,\bH_i^H\!\right)\!\bigg)\bw_{i,u}. \nonumber
   	 	\end{align}
    Let the dual objective function be $g(\bD_i, \lambda_{i,u}) = \min_{{\bf w}_{i,u},p_{\rm 0}} \mathcal{L}$. 
   	We then need ${\rm tr}(\bD_i)\leq N_b$ and $\bK_{i} ({\pmb \Lambda}) \succeq \alpha \big(1 + 1/{\gamma_{i,u}}\big)\lambda_{i,u}  \bh_{i,i,u}\bh_{i,i,u}^H$, where $\bK_{i}({\pmb\Lambda})$ is in \eqref{eq:K_matrix}.
    Consequently, the Lagrangian dual problem of \eqref{eq:dl_problem} becomes
    \begin{align}
        \label{eq:dual_problem_reduced_inv}
        \max_{{\bD}_i }\max_{\lambda_{i,u}}& \;\;  \sum_{i,u}^{N_c,N_u}\lambda_{i,u}\sigma^2\\
        \label{eq:dual_sinr_const_reduced_inv}
        {\text{ \rm subject to}} & \;\; {\bf K}_{i}({\pmb \Lambda}) \succeq \alpha \bigg(1 + \frac{1}{\gamma_{i,u}}\bigg)\lambda_{i,u}  \bh_{i,i,u}\bh_{i,i,u}^H,\\
        \nonumber
        &\;\;  \bD_i \succeq 0, \ \bD_i\in \bbR^{N_b\times N_b}: \text{\rm diagonal}, \\
        \nonumber
        &\;\;  {\rm tr}(\bD_i) \leq N_b \quad \forall i,u 
   	\end{align}
    We note that the differences between the problem in \eqref{eq:dual_problem_reduced} and in \eqref{eq:dual_problem_reduced_inv} are the reversed objectives with respect to $\lambda_{i,u}$ (i.e., $\min$ vs. $\max$) and reversed SINR inequalities in \eqref{eq:dual_sinr_const_reduced}  and \eqref{eq:dual_sinr_const_reduced_inv}.
    Since the problems in \eqref{eq:dual_problem_reduced} and \eqref{eq:dual_problem_reduced_inv} have  optimal solutions when the SINR constraints are active, the solutions for the problems are indeed equivalent to each other with the active SINR constraints.
\end{proof}
\end{theorem}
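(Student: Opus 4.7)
The plan is to derive the Lagrangian dual of (\ref{eq:dl_problem}) in closed form and then show that the resulting max--max formulation coincides with the max--min formulation in (\ref{eq:dual_problem}) once the MMSE combiner is plugged in. I would first make two cosmetic but essential reformulations of the primal: (i) rewrite the per-antenna constraint (\ref{eq:per_antenna_const}) using $\bbE[\bx_{{\rm q},i}\bx_{{\rm q},i}^H] = \alpha \bW_i\bW_i^H$, which follows from the AQNM together with the diagonal structure of $\bC_{\bq_i\bq_i}$, and (ii) scale the objective from $p_0$ to $N_c N_b p_0$, so that the coefficient of $p_0$ in the Lagrangian pairs naturally with a trace over the per-antenna multipliers.

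Next I would form the Lagrangian $\cL$ by attaching multipliers $\lambda_{i,u}\ge 0$ to the SINR constraints (written in the form $\alpha^2|\bh_{i,i,u}^H\bw_{i,u}|^2/\gamma_{i,u} \ge \alpha^2\sum_{(j,v)\ne(i,u)}|\bh_{j,i,u}^H\bw_{j,v}|^2 + Q_{i,u} + \sigma^2$) and $\mu_{i,m}\ge 0$ to the per-antenna constraints. The three pieces I need to reorganize are the cross-quantization term $\sum_{i,u}\lambda_{i,u}Q_{i,u}$, the per-antenna coupling term $\sum_{i,m}\mu_{i,m}[\alpha \bW_i\bW_i^H]_{m,m}$, and the $p_0$ term. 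For the first, I would invoke the identity already established in \cite{choi2020quantized} (recycling their proof of Theorem 2) to convert it into the quadratic form $\alpha\beta\sum_{i,u}\bw_{i,u}^H{\rm diag}(\bH_i{\pmb \Lambda}\bH_i^H)\bw_{i,u}$. For the second, collecting the $\mu_{i,m}$'s into the diagonal matrix $\bD_i$ turns the diagonal expression into $\sum_{i,u}\bw_{i,u}^H\bD_i\bw_{i,u}$. For the third, $\sum_{i,m}\mu_{i,m}p_0$ becomes $p_0\sum_i{\rm tr}(\bD_i)$, which pairs against the $N_c N_b p_0$ from the rescaled objective to give the cleanly vanishing coefficient $p_0\sum_i[{\rm tr}(\bD_i)-N_b]$.

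After substitution, $\cL$ is linear in $p_0$ and quadratic in each $\bw_{i,u}$, so minimizing over the primal variables yields $g(\bD_i,\lambda_{i,u})$ that is either $-\infty$ or $\sum_{i,u}\lambda_{i,u}\sigma^2$, and finiteness forces the two constraints ${\rm tr}(\bD_i)\le N_b$ and $\bK_i({\pmb \Lambda})\succeq \alpha(1+1/\gamma_{i,u})\lambda_{i,u}\bh_{i,i,u}\bh_{i,i,u}^H$, producing the max--max problem (\ref{eq:dual_problem_reduced_inv}). To bridge this with (\ref{eq:dual_problem_reduced}), I would introduce the MMSE combiner $\bff_{i,u} = \bZ_{i,u}^{-1}\bh_{i,i,u}$, observe that it maximizes $\hat\Gamma_{i,u}$ in (\ref{eq:sinr_ul_ofdm}), and show that the SINR inequality $\max_{\bff_{i,u}}\hat\Gamma_{i,u}\ge\gamma_{i,u}$ reduces, after an algebraic rearrangement, to the reversed matrix inequality $\bK_i({\pmb\Lambda})\preceq \alpha(1+1/\gamma_{i,u})\lambda_{i,u}\bh_{i,i,u}\bh_{i,i,u}^H$.

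The main obstacle is the last reconciliation step: (\ref{eq:dual_problem_reduced}) and (\ref{eq:dual_problem_reduced_inv}) appear to contradict one another because one imposes $\bK_i\preceq\cdot$ (with $\min_\lambda$) while the dual derivation gives $\bK_i\succeq\cdot$ (with $\max_\lambda$). I would resolve this by arguing that at optimality both the SINR constraints in the primal and, equivalently, the matrix inequalities in the dual are active (tight): any slack would allow scaling $\lambda_{i,u}$ to improve $\sum\lambda_{i,u}\sigma^2$ in the relevant direction for each formulation. Once the constraints are shown to hold with equality, the distinction between $\succeq$ and $\preceq$ and between $\min_\lambda$ and $\max_\lambda$ disappears, and the two problems share the same optimal $(\bD_i,{\pmb\Lambda})$. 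This active-constraint argument, rather than the routine algebra, is where I would spend the most care.
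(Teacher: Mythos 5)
Your proposal follows essentially the same route as the paper's proof: the same rescaled objective $N_cN_bp_0$, the same three Lagrangian identities for the quantization, per-antenna, and $p_0$ terms, the same max--max dual with $\bK_i\succeq\cdot$, the same MMSE-combiner reduction to $\bK_i\preceq\cdot$, and the same active-constraint argument to reconcile the reversed inequalities (the paper merely presents the MMSE reduction first and the Lagrangian derivation second). One small imprecision: $\bbE[\bx_{{\rm q},i}\bx_{{\rm q},i}^H]=\alpha^2\bW_i\bW_i^H+\alpha\beta\,{\rm diag}(\bW_i\bW_i^H)$, so the identity you invoke holds only entrywise on the diagonal --- which is all the per-antenna constraint requires.
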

We remark that $\hat{\Gamma}_{i,u}$ in \eqref{eq:sinr_ul_ofdm} can be interpreted as the SINR of user $u$ in cell $i$ for the UL system with low-resolution ADCs, i.e., $\bff_{i,u}$ is a combiner for user $u$ in cell $i$, $\lambda_{i,u}$ is transmit power for user $u$ in cell $i$, $\bC_i$ is an aggregated quantization noise of BS$_i$ after quantization, and $\bD_i$ is a diagonal matrix of noise variances at the antennas of BS$_i$ with uncertain noise covariance in UL direction.
Accordingly, the Lagrangian dual problem is considered to be an antenna power minimax problem with noise variance constraints for a virtual UL system with low-resolution ADCs at the BSs.
\begin{corollary}[Strong Duality] Zero duality gap exists between the DL formulation and its associated dual.
    \begin{proof}
    The primal DL problem is rewritten as
    \begin{gather}
        \label{eq:strong_pf}
        \min_{{\bW},P_o} p_o \\ 
        \label{eq:strong_pf1}
        {\rm s.t.}\ \Gamma_{i,u} \geq \gamma_{i,u}, \quad \forall i,u\\ 
        \label{eq:strong_pf2}
        \left[\alpha\bW_i\bW_i^H\right]_{m,m} \leq p_o
     \end{gather}

    Let  
    $\bW_{\rm BD} \!=\! {\rm blkdiag}(\bW_1,\dots,\bW_{N_c})$,  $\tilde{\bW}_{\rm BD} \!=\! {\rm blkdiag}((\bI_{N_b}\otimes\bW_1),\dots,(\bI_{N_b}\otimes\bW_{N_c}))$, $\bE_{j,i,u} \!=\! {\rm diag}(\bh_{j,i,u}\bh_{j,i,u}^H)$, and
    $\bbE_{i,u} \!=\! {\rm vec}(\bE^{1/2}_{1,i,u},\ldots,\bE^{1/2}_{N_c,i,u})$. 
    The SINR constraints in \eqref{eq:strong_pf1} can be rewritten as
    \begin{align}
        \label{eq:strong_pf4}
        &\alpha^2 \bigg(1+\frac{1}{\gamma_{i,u}}\bigg) |{\bf w}_{i,u}^H \bh_{i,i,u}|^2 \\
        &\geq  \left\|
        \begin{matrix}
            \alpha\bW_{\rm BD}^H{\rm vec}(\bh_{1,i,u},\dots,\bh_{N_c,i,u}) \\
            \sqrt{\alpha\beta} \tilde{\bW}_{\rm BD} \bbE_{i,u} \\
            \sigma
        \end{matrix}
        \right\|^2
    \end{align}
    for all $i$, $u$, and $k$. In addition, the per-antenna constraint in \eqref{eq:strong_pf2} is rewritten as
    \begin{align}
        \left[\alpha\bW_i\bW_i^H\right]_{m,m}
        = \alpha \|\be_{m}^H\bW_i \|^2
    \end{align}
	for all $m$ which is a convex constraint. 
	Accordingly, we eventually have the standard SOCP form. 
	Next, \eqref{eq:dl_problem} is strictly feasible because, for a given solution ${\bf W}$, it can be scaled by a factor of $c >1$ satisfying the constraints. 
	Thus, strong duality holds between \eqref{eq:dl_problem} and \eqref{eq:dual_problem}. 
    \end{proof}
\end{corollary}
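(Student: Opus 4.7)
The plan is to show that the primal problem \eqref{eq:dl_problem} admits a convex reformulation as a second-order cone program (SOCP) and is strictly feasible, so that Slater's condition yields zero duality gap by standard convex duality.

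First, I would put the SINR inequality \eqref{eq:dl_sinr} into SOC form. The constraint depends on the beamformers only through the magnitudes $|\bh_{j,i,u}^H\bw_{j,v}|$ and the diagonal entries of $\bW_i\bW_i^H$ (via \eqref{eq:quantizationnoise} and \eqref{eq:Cqq_dl}), so it is invariant under a global phase rotation of each $\bw_{i,u}$. I can therefore fix phases so that $\bh_{i,i,u}^H\bw_{i,u}$ is real and non-negative. Rearranging \eqref{eq:sinr_dl_ofdm} then gives
\begin{align*}
\alpha\sqrt{1+\tfrac{1}{\gamma_{i,u}}}\,\bh_{i,i,u}^H\bw_{i,u} \;\geq\; \|\bv_{i,u}(\bW)\|,
\end{align*}
where $\bv_{i,u}(\bW)$ stacks the interference contributions $\{\alpha\bh_{j,i,u}^H\bw_{j,v}\}$, the per-antenna quantization contributions arising from $\mathrm{Q}_{i,u}$, and the scalar $\sigma$. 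The right-hand side is the 2-norm of an affine function of $\bW$, while the left-hand side is linear in $\bW$, which is the canonical SOC form.

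Second, the per-antenna constraint \eqref{eq:per_antenna_const} simplifies to $\alpha\|\be_m^H\bW_i\|^2 \leq p_{\rm 0}$, which is convex quadratic in $(\bW,p_{\rm 0})$ and can be written as a rotated Lorentz cone constraint. With the linear objective $p_{\rm 0}$, the overall program is an SOCP in the joint variable $(\bW,p_{\rm 0})$ and hence convex. Third, I would verify Slater's condition: take any $\bW$ that meets the SINR targets and scale it by a factor $c>1$; the desired signal together with all interference and quantization terms scale by $c^2$, whereas the additive noise $\sigma^2$ does not, so each $\Gamma_{i,u}$ strictly exceeds $\gamma_{i,u}$. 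Choosing $p_{\rm 0}$ strictly larger than $\max_{i,m}[\alpha c^2\bW_i\bW_i^H]_{m,m}$ then makes every per-antenna constraint strict. This produces a point in the relative interior, so Slater's condition holds and strong duality follows from standard SOCP theory.

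The hard part, I expect, is not any single estimate but cleanly exposing the SOC structure: the quantization term \eqref{eq:quantizationnoise} couples beamformers across all cells through diagonal covariance matrices, so some block-stacking of the interference sources together with the per-antenna quantization sources into a single norm expression is unavoidable. Once that stacking is set up and phase invariance is used to linearize the desired-signal term, the remaining steps reduce to invoking Slater's condition for convex SOCPs.
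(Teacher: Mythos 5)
Your proposal is correct and follows essentially the same route as the paper: recast the SINR constraints as second-order cone constraints (stacking interference, quantization, and noise terms into one norm), note the per-antenna constraint is a convex quadratic, and verify Slater's condition by scaling a feasible beamformer by $c>1$ so that the unscaled $\sigma^2$ makes every SINR inequality strict. Your explicit use of phase invariance to make the desired-signal term real and linear is actually a cleaner justification of the SOC form than the paper's, which leaves the constraint in squared form and simply asserts the SOCP structure.
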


\begin{corollary}
	\label{cor:dl_precoder_ofdm}
	An optimal DL precoder forms a linear relationship with the UL MMSE receiver, i.e.,  $\bw_{i,u} = \sqrt{{\tau}_{i,u}}\bff_{i,u} \;\forall i, u$.
	Here, ${\tau}_{i,u}$ is derived from solving ${\btau} = {\bSigma}^{-1}{\bf 1}$, where  ${\bf 1}$ is a $N_uN_c \times 1$ column vector, ${\btau}=[{\btau}_{1}^T, {\btau}_{2}^T, \cdots, {\btau}_{N_c}^T ]^T$ and  ${\btau}_{i}^T = [{\tau}_{i,1}, {\tau}_{i,2}, \cdots, {\tau}_{i,N_u}]^T$, and ${\bSigma}$ is defined as
	\begin{equation}
		\label{eq:wb_constraint_matrix)}
		{\bSigma} = 
		\begin{pmatrix}
			{\bSigma}_{1,1} & {\bSigma}_{1,2} & \cdots & {\bSigma}_{1,N_c} \\
			{\bSigma}_{2,1} & {\bSigma}_{2,2} & \cdots & {\bSigma}_{2,N_c} \\
			\vdots  & \vdots  & \ddots & \vdots  \\
			{\bSigma}_{N_c,1} & {\bSigma}_{N_c,2} & \cdots & {\bSigma}_{N_c,N_c}
		\end{pmatrix},
	\end{equation}
	and
	\begin{align} 
		&[{\bSigma}_{i,j}]_{u,v} \nonumber \\
		&=\begin{cases}
			\frac{\alpha^2}{\gamma_{i,u}}|\bh_{i,i,u}^H\bff_{i,u}|^2 \! - \alpha(1\!-\!\alpha)\bff_{i,u}^H
            {\rm diag}\big(\bh_{i,i,u}\bh_{i,i,u}^H\!\big)\bff_{i,u} 
			\\ 
			\text{if } i=j \text{, } u=v, \\
			\nonumber
			- \alpha^2 |\bh_{j,i,u}^H\bff_{j,v}|^2\! - \alpha(1\!-\!\alpha)\bff_{j,v}^H
            {\rm diag}\big(\bh_{j,i,u}\bh_{j,i,u}^H\!\big)\bff_{j,v}
			\\
			\text{otherwise.}
		\end{cases}
	\end{align}
	
	\begin{proof}
		Starting from the Lagrangian in  \eqref{eq:lagrangian_reform}, we find the derivative of the Lagrangian regarding $\bw_{i,u}$ as
		\begin{align}
			&2\bigg(\alpha\bD_{i}\! -\! \alpha^2\!\left(\!1\!+\!\frac{1}{\gamma_{i,u}}\!\right)\!\lambda_{i,u}\bh_{i,i,u}\bh_{i,i,u}^H \label{eq:derivative} \\
			&+\alpha^2\!\sum_{j,v}\lambda_{j,v}\bh_{i,j,v}\bh_{i,j,v}^H \nonumber 
			+\!\alpha(1-\alpha){\rm diag}\!\left(\bH_i\,{\pmb\Lambda}\,\bH_i^H\!\right)\!\bigg)\bw_{i,u}. \nonumber
		\end{align}
		We then set the derivative to zero, and solve it for $\bw_{i,u}$ as
		\begin{align}
			\nonumber
			&\bw_{i,u} =  \bigg( \!\alpha^2\!\!\!\!\!\sum_{(j,v)\neq (i,u)} \!\!\!\lambda_{j,v} \bh_{i,j,v}\bh_{i,j,v}^H\! \\ 
			&+\! \alpha(1\!-\!\alpha){\rm diag}\!\left(\bH_i{\pmb\Lambda}\bH_i^H\!\right)\! \nonumber +\! \alpha \bD_i\! \bigg)^{\!-1}\!\!\frac{\alpha^2}{\gamma_{i,u}}\!\lambda_{i,u}\bh_{i,i,u}\bh_{i,i,u}^H\bw_{i,u}\nonumber  \\
			\nonumber
			& = \frac{\alpha^2}{\gamma_{i,u}}\lambda_{i,u}\bh_{i,i,u}^H\bw_{i,u} \bff_{i,u}
		\end{align}
		where $\bff_{i,u}$ is in \eqref{eq:mmse} and the last equality is valid because $\bh_{i,i,u}^H\bw_{i,u}$ is a scalar.
		Accordingly, we can justify the form of $\bw_{i,u} = \sqrt{\tau_{i,u}}\bff_{i,u}$ with properly designed $\tau_{i,u}$.
		
		To satisfy the KKT stationarity condition with the DL constraint in \eqref{eq:dl_sinr}, $\Gamma_{i,u}$ has to meet the target SINR constraint with equality.
		Since the DL precoder is deeply embedded in the quantization noise term, i.e., ${\rm Q}_{i,u}$, in the DL SINR expression, we first simplify ${\rm Q}_{i,u}$. 
		Let us define $\mu_{i',u'}$ where $\mu_{i',u'}=1$ if $i'=i$ and $u'=u$, and $\mu_{i',u'}=0$ otherwise. 
		To compose the DL constraint in a tractable form, we can rewrite the quantization error term in \eqref{eq:sinr_dl_ofdm} as
		\begin{align}
			\nonumber
			{\rm Q}_{i,u} &\stackrel{(a)}=\! \alpha\beta\sum_{j=1}^{N_c}\bh_{j,i,u}^H{\rm diag}\big(\ubW_j\ubW_j^H\big)\bh_{j,i,u}\! \\
			\nonumber
			&=\alpha\beta\sum_{i',u',j}\mu_{i',u'}\bh_{j,i',u'}^H
			{\rm diag}\big(\ubW_j\ubW_j^H\big)\bh_{j,i',u'}
			\\ \nonumber
			&=\alpha\beta\sum_{j,v}\bw_{j,v}^H
			{\rm diag}\left(\bH_j\,\ubM\,\bH_j^H\right)\!\bw_{j,v}
			\\ 
			&\stackrel{(b)}= \alpha\beta\sum_{j,v}\bw_{j,v}^H
			 {\rm diag}\left(\bh_{j,i,u}\bh_{j,i,u}^H\!^H\!\right)\!\bw_{j,v}^H, 
			\label{eq:QN_reform_ofdm}
		\end{align}
		where $(a)$ is obtained by plugging \eqref{eq:Cqq_dl} into \eqref{eq:quantizationnoise}.
		We define $\bM_i \!=\! {\rm diag}(\mu_{i,1},\dots,\mu_{i,N_u})$ and ${\bM} \!=\! {\rm blkdiag}({\bM}_1,\dots, {\bM}_{N_c})$.
		By the definition of $\bM$, $(b)$ is from $\bH_j\,\ubM\,\bH_j^H = \bh_{j,i,u}\bh_{j,i,u}^H$.
	
Accordingly, we can create the DL SINR constraints with equality as follows:
\begin{align}
	\nonumber
	\sigma^2&=\frac{\alpha^2}{\gamma_{i,u}} |\bh_{i,i,u}^H{\bf w}_{i,u}|^2 -\alpha^2 \!\!\!\!\sum_{(j,v) \neq (i,u)}^{N_c, N_u}\!|\bh_{j,i,u}^H\bw_{j,v}|^2  -{\rm Q}_{i,u} \\  
	\nonumber
	&\stackrel{(a)}=  \frac{\alpha^2}{\gamma_{i,u}} |\bh_{i,i,u}^H{\bf f}_{i,u}|^2 \tau_{i,u} 
	\nonumber
	-  \alpha^2 \!\!\!\!\!\!\!\sum_{(j,v) \neq (i,u)}\!\!\!\!\! {| \bh_{j,i,u} \bff_{j,v}^H |^2} \tau_{j,v} 
	\!\\
	&\quad-\! \alpha\beta\!\sum_{j,v}\!\bw_{j,v}^H{\rm diag}\!\left(\bh_{j,i,u}\bh_{j,i,u}^H\!^H\right)\!\bw_{j,v}^H, \label{eq:dl_active}
\end{align} 
for all $i$, $u$ where $(a)$ is from \eqref{eq:QN_reform_ofdm} and $\bw_{i,u} = \sqrt{\tau_{i,u}}\bff_{i,u}$. 
Combining \eqref{eq:dl_active} for all $i,u$ gives a system of linear equations as $\sigma^2{\bf 1} = {\bSigma}{\btau} $, thereby having $\tau_{i,u}$ through ${\btau}=\sigma^2 {\bSigma}^{-1}{\bf 1}$.
\end{proof}
\end{corollary}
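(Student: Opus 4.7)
The plan is to exploit the KKT stationarity condition on the reformulated Lagrangian in \eqref{eq:lagrangian_reform} to extract the direction of $\bw_{i,u}$, and then use KKT complementary slackness (i.e., the active SINR constraints in \eqref{eq:dl_sinr}) to pin down the scalar magnitudes $\tau_{i,u}$ via a single linear system. Strong duality, established in the preceding corollary, guarantees that these KKT conditions are both necessary and sufficient for optimality, so this two-stage decomposition is legitimate.

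First I would differentiate $\cL$ in \eqref{eq:lagrangian_reform} with respect to $\bw_{i,u}^*$ to obtain a Wirtinger-type expression of the form $(\bZ_{i,u}/\alpha)\bw_{i,u} = \alpha(1+1/\gamma_{i,u})\lambda_{i,u}\bh_{i,i,u}\bh_{i,i,u}^H \bw_{i,u}$. After moving the interference/quantization-noise/noise-variance terms to the left, the right-hand side reduces to a scalar multiple of $\bh_{i,i,u}$ because $\bh_{i,i,u}^H\bw_{i,u}$ is scalar. Recognizing the inverse matrix as $\bZ_{i,u}^{-1}$ (with a matching $\alpha$ factor) gives $\bw_{i,u} = c_{i,u}\,\bZ_{i,u}^{-1}\bh_{i,i,u} = c_{i,u}\,\bff_{i,u}$ for some real scalar, so setting $c_{i,u}=\sqrt{\tau_{i,u}}$ establishes the claimed form.

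Next I would enforce that each DL SINR constraint is tight at optimality, which follows from the SINR constraints being active at the optimum (if any constraint were slack, one could shrink the corresponding $\bw_{i,u}$ and decrease $p_0$). Substituting $\bw_{j,v}=\sqrt{\tau_{j,v}}\bff_{j,v}$ into \eqref{eq:sinr_dl_ofdm} turns the signal and linear-interference terms immediately into quantities linear in $\tau_{j,v}$. The main obstacle is the quantization-noise term $\mathrm{Q}_{i,u}$, which at first glance is quadratic in $\bW_j$ through $\mathrm{diag}(\bW_j\bW_j^H)$. The key trick is the indicator rewriting borrowed from the proof of Theorem~\ref{thm:duality_ofdm}: introducing $\mu_{i',u'}$ and $\bM$ as done in \eqref{eq:QN_reform_ofdm} lets me repackage $\mathrm{Q}_{i,u}$ as $\alpha\beta\sum_{j,v}\bw_{j,v}^H\,\mathrm{diag}(\bh_{j,i,u}\bh_{j,i,u}^H)\,\bw_{j,v}$, which after substituting the MMSE form becomes linear in $\tau_{j,v}$ as well.

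Finally I would gather coefficients: for each pair $(i,u)$, the tight SINR equation has the shape $\sigma^2 = \sum_{j,v}[\bSigma_{i,j}]_{u,v}\,\tau_{j,v}$, where the diagonal coefficient collects the signal minus the self quantization contribution and the off-diagonal coefficient collects the negative of the inter-user interference plus its quantization contribution, matching exactly the piecewise expression in the statement. Stacking all $N_cN_u$ equations gives $\sigma^2\mathbf{1} = \bSigma\btau$, and assuming $\bSigma$ is invertible (which follows from feasibility of the original problem with positive SINR targets) yields $\btau = \sigma^2\bSigma^{-1}\mathbf{1}$, completing the proof. I expect the bookkeeping in the quantization-noise rewriting to be the only delicate step; everything else is KKT manipulation plus linear algebra.
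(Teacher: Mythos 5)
Your proposal is correct and follows essentially the same route as the paper: KKT stationarity on the Lagrangian in \eqref{eq:lagrangian_reform} to show $\bw_{i,u}$ is a scalar multiple of $\bff_{i,u}$, then the indicator-based rewriting of ${\rm Q}_{i,u}$ plus the active SINR constraints to obtain the linear system $\sigma^2\mathbf{1}=\bSigma\btau$. Your final expression $\btau=\sigma^2\bSigma^{-1}\mathbf{1}$ matches the paper's proof as well.
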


\subsection{Distributed Iterative Algorithm}
\label{subsec:algorithm}

In this subsection, we characterize solutions by exploiting strong duality and further adopt the fixed-point iteration with a subgradient projection method \cite{dahrouj2010coordinated}.

\begin{corollary}
\label{cor:solution_ofdm}
For a fixed $\bD_i$, the optimal power for the uplink total
power minimization problem in \eqref{eq:dual_problem} is given as
\begin{align}
    \label{eq:solution_ofdm}
    \lambda_{i,u} = \frac{1}{\alpha \left(1+\frac{1}{\gamma_{i,u}}\right)\bh_{i,i,u}^H \bK_{i}^{-1}({\pmb\Lambda}) \bh_{i,i,u}},
\end{align}
where $\bK_{i}({\pmb\Lambda})$ is defined in \eqref{eq:K_matrix}.
\end{corollary}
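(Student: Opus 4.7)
The plan is to start from the dual SINR constraint \eqref{eq:dual_sinr_const} with the MMSE combiner in \eqref{eq:mmse}, enforce the constraint with equality, and then use the matrix inversion lemma to convert the resulting expression in $\bZ_{i,u}$ into one in $\bK_{i}({\pmb\Lambda})$. Concretely, since the outer maximization over $\lambda_{i,u}$ in \eqref{eq:dual_problem_reduced_inv} (equivalently the minimization in \eqref{eq:dual_problem_reduced}) has all SINR constraints tight at the optimum, I would plug $\bff_{i,u}=\bZ_{i,u}^{-1}\bh_{i,i,u}$ into $\hat{\Gamma}_{i,u}$ in \eqref{eq:sinr_ul_ofdm} and set $\hat{\Gamma}_{i,u}=\gamma_{i,u}$, yielding
\begin{equation*}
    \alpha^{2}\lambda_{i,u}\,\bh_{i,i,u}^{H}\bZ_{i,u}^{-1}\bh_{i,i,u}=\gamma_{i,u}.
\end{equation*}

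The next step is to relate $\bZ_{i,u}$ to $\bK_{i}({\pmb\Lambda})$. I would expand the $\alpha\beta\,{\rm diag}(\bH_{i}{\pmb\Lambda}\bH_{i}^{H}+\bD_{i})$ term in $\bZ_{i,u}$ using the fact that $\bD_{i}$ is already diagonal and the identity $\alpha^{2}+\alpha\beta=\alpha(\alpha+\beta)=\alpha$, so that the coefficient in front of $\bD_{i}$ collapses to $\alpha$. Comparing against \eqref{eq:K_matrix} and separating out the $(i,u)$ summand should give the clean rank-one relation
\begin{equation*}
    \bZ_{i,u}=\alpha\,\bK_{i}({\pmb\Lambda})-\alpha^{2}\lambda_{i,u}\,\bh_{i,i,u}\bh_{i,i,u}^{H}.
\end{equation*}

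Applying the Sherman--Morrison identity to this rank-one update and abbreviating $b\bydef \bh_{i,i,u}^{H}\bK_{i}^{-1}({\pmb\Lambda})\bh_{i,i,u}$, I obtain
\begin{equation*}
    \bh_{i,i,u}^{H}\bZ_{i,u}^{-1}\bh_{i,i,u}=\frac{b/\alpha}{1-\alpha\lambda_{i,u}\,b}.
\end{equation*}
Substituting back into the tight SINR equation, rearranging $\alpha\lambda_{i,u}b(1+\gamma_{i,u})=\gamma_{i,u}$, and dividing numerator and denominator by $\gamma_{i,u}$ produces exactly \eqref{eq:solution_ofdm}.

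The main obstacle I anticipate is the bookkeeping in the second step: the diagonal matrix $\bD_{i}$ appears both as a stand-alone term and inside the $\alpha\beta\,{\rm diag}(\cdot)$ factor of $\bZ_{i,u}$, and one has to combine these using $\alpha+\beta=1$ to recover $\alpha\bK_{i}({\pmb\Lambda})$ up to the $(i,u)$ rank-one piece. Once that identification is in hand, the remainder is a one-line Sherman--Morrison calculation together with the standard argument that the SINR constraints must be active at a minimizer of the dual objective (any slack would allow $\lambda_{i,u}$ to decrease without violating feasibility).
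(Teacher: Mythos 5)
Your argument is correct, and it takes a genuinely different route from the paper's. The paper's proof works in the primal/Lagrangian domain: it sets the derivative \eqref{eq:derivative} of the downlink Lagrangian with respect to $\bw_{i,u}$ to zero, and since the bracketed matrix there is exactly $\alpha\bK_i({\pmb\Lambda}) - \alpha^2\big(1+\tfrac{1}{\gamma_{i,u}}\big)\lambda_{i,u}\bh_{i,i,u}\bh_{i,i,u}^H$, left-multiplying the stationarity condition by $\bh_{i,i,u}^H\bK_i^{-1}({\pmb\Lambda})$ and cancelling the nonzero scalar $\bh_{i,i,u}^H\bw_{i,u}$ yields \eqref{eq:solution_ofdm} with no matrix-inversion lemma. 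You instead stay entirely in the virtual-UL domain: your rank-one identity $\bZ_{i,u}=\alpha\bK_i({\pmb\Lambda})-\alpha^2\lambda_{i,u}\bh_{i,i,u}\bh_{i,i,u}^H$ is right (the coefficient of $\bD_i$ in $\bZ_{i,u}$ collapses to $\alpha^2+\alpha\beta=\alpha$ precisely because $\bD_i$ is diagonal, so it passes through the ${\rm diag}(\cdot)$ operator), the Sherman--Morrison step checks out ($\bh_{i,i,u}^H\bZ_{i,u}^{-1}\bh_{i,i,u}=\tfrac{b/\alpha}{1-\alpha\lambda_{i,u}b}$ with $b=\bh_{i,i,u}^H\bK_i^{-1}({\pmb\Lambda})\bh_{i,i,u}$), and the activity of the SINR constraints at the optimum is the same monotonicity argument the paper invokes. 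The trade-off: the paper's derivation is shorter because it never inverts $\bZ_{i,u}$, while yours makes explicit why $\bK_i({\pmb\Lambda})$ -- the UL received-signal covariance -- is the natural object in the fixed point, and exhibits \eqref{eq:solution_ofdm} directly as the classical UL power-control update under MMSE reception.
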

\begin{proof}
Setting \eqref{eq:derivative} to zero and solving it for $\lambda_{i,u}$ produce \eqref{eq:solution_ofdm}. The solution of $\lambda_{i,u}$ then satisfies the stationary condition, and we further observe that the UL SINR constraint in \eqref{eq:per_antenna_const}
is active at the solution satisfying the complementary slackness condition. Therefore, \eqref{eq:solution_ofdm} is optimal solution of the virtual UL problem.
\end{proof}

The UL power control solution from Corollary~\ref{cor:solution_ofdm} is fundamentally designed for the subproblem of \eqref{eq:dual_problem} written as
\begin{align}
        \label{eq:subproblem}
        f\!\left(\bD_i\right)=&\min_{\lambda_{i,u}} \;\;  \sum_{i,u}^{N_c,N_u}\lambda_{i,u}\sigma^2\\
        \nonumber
        &{\text{ \rm subject to}}  \;\; \max_{\bff_{i,u}}\hat{\Gamma}_{i,u} \geq \gamma_{i,u} \;\;\forall i,u,
\end{align}
which is the inner optimization on $\lambda_{i,u}$ of $g(\bD_i, \lambda_{i,u})$ when MMSE filter in \eqref{eq:mmse} is used for $\bff_{i,u}$. 
However, the solutions do not guarantee a global optimum over the entire feasible candidates of $\bD_i$ in \eqref{eq:D_const1}-\eqref{eq:D_const2}. Therefore, as a second stage, the external loop on $\bD_i$ is needed. 
We then use a projected subgradient ascend method to maximize the objective function while satisfying the constraint on $\bD_i$. Note that $f\!\left(\bD_i\right)$ is concave in $\bD_i$ since the Lagrangian function in $\eqref{eq:lagrangian}$ is affine in $\lambda$ and $\mu$, and the infimum of affine functions is still concave. 

\begin{corollary}[Subgradient]
	 ${\rm diag}\left(\sum_{u}\bw_{i,u}\bw_{i,u}^H\right)$ is a subgradient of \eqref{eq:subproblem} in updating $\bD_i$.
	\begin{proof}
	Using strong duality between UL and DL, the UL subproblem $f\!\left(\bD_i\right)$ in \eqref{eq:subproblem} for a fixed $\bD_i$ is equivalent to
	\begin{align}
		\label{eq:subproblem1}
		f\!\left(\bD_i\right)=&\min_{\bw_{i,u}} \;\;  \sum_{u}^{N_u}\bw_{i,u}^H\bD_i\bw_{i,u}\\
		\nonumber
		&{\text{ \rm subject to}}  \;\; \Gamma_{i,u} \geq \gamma_{i,u} \;\;\forall i,u,
	\end{align}
	based on the proof of Theorem \ref{thm:duality_ofdm}.
	We introduce two arbitrary diagonal covariance matrices $\bD_i$ and $\bD_i^\prime$ whose associated optimal beamformer is $\bw_{i,u}$ and $\bw_{i,u}^\prime$, respectively.
	We then derive a subgradient as follows:
	\begin{align}
		&f\!\left(\bD_i^\prime\right)-f\!\left(\bD_i\right)=\sum_{u}\bw_{i,u}^{'H}\bD_i^\prime\bw_{i,u}^\prime-\sum_{u}\bw_{i,u}^H\bD_i\bw_{i,u} \\
		&\qquad\qquad\qquad\;\;\leq\sum_{u}\bw_{i,u}^H\bD_i^\prime\bw_{i,u}-\sum_{u}\bw_{i,u}^H\bD_i\bw_{i,u} \\
		&\qquad \qquad={\rm tr}\left({\rm diag}\left(\sum_{u}\bw_{i,u}\bw_{i,u}^H\right)\left(\bD_i^\prime-\bD_i\right)\right).
	\end{align}
	By the definition of subgradients, the multiplier in front of $\bD_i^\prime-\bD_i$ becomes a subgradient of $f(\bD_i)$.
    \end{proof}
\end{corollary}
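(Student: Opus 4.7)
The plan is to invoke the strong duality already established and reformulate $f(\bD_i)$ as the DL problem in \eqref{eq:subproblem1}, and then derive the desired supergradient inequality by a standard envelope-style comparison of two DL optima at $\bD_i$ and a perturbation $\bD_i'$. Since $f$ is concave in $\bD_i$, as observed in the paragraph preceding the corollary, the object I must produce is a matrix $\bG$ satisfying $f(\bD_i') - f(\bD_i) \leq {\rm tr}\!\big(\bG(\bD_i' - \bD_i)\big)$ for every admissible $\bD_i'$, and the claim is that $\bG = {\rm diag}\big(\sum_u \bw_{i,u}\bw_{i,u}^H\big)$ works.

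The central observation is that the DL feasible set in \eqref{eq:subproblem1} does not depend on $\bD_i$: the DL SINR in \eqref{eq:sinr_dl_ofdm} together with the quantization noise in \eqref{eq:quantizationnoise} depends only on the beamformers and channels, so $\bD_i$ enters the problem purely through the quadratic objective $\sum_u \bw_{i,u}^H \bD_i \bw_{i,u}$. I would therefore pick two arbitrary admissible diagonal PSD matrices $\bD_i$ and $\bD_i'$, let $\bw_{i,u}$ and $\bw_{i,u}'$ denote their respective DL optima, and note that $\bw_{i,u}$ remains feasible for the $\bD_i'$ problem. Optimality of $\bw_{i,u}'$ then immediately yields $\sum_u (\bw_{i,u}')^H \bD_i' \bw_{i,u}' \leq \sum_u \bw_{i,u}^H \bD_i' \bw_{i,u}$.

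Subtracting $f(\bD_i)=\sum_u \bw_{i,u}^H \bD_i \bw_{i,u}$ from both sides and collecting terms produces $f(\bD_i') - f(\bD_i) \leq \sum_u \bw_{i,u}^H(\bD_i' - \bD_i)\bw_{i,u}$. Because $\bD_i' - \bD_i$ is diagonal, the quadratic form picks up only the diagonal entries of $\sum_u \bw_{i,u}\bw_{i,u}^H$, which lets me rewrite the right-hand side as ${\rm tr}\!\big({\rm diag}(\sum_u \bw_{i,u}\bw_{i,u}^H)(\bD_i' - \bD_i)\big)$. Comparing against the defining supergradient inequality for the concave function $f$ identifies ${\rm diag}\big(\sum_u \bw_{i,u}\bw_{i,u}^H\big)$ as the claimed (sub)gradient.

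I do not anticipate a serious obstacle; the derivation is the standard one-sided value-function argument for parametric programs. The only items that require care are (i) confirming that the quantization-noise term in \eqref{eq:quantizationnoise} involves no dependence on the dual matrices $\bD_j$, so that feasibility transfers cleanly between $\bD_i$ and $\bD_i'$, and (ii) applying the diagonal-trace identity $\bw^H(\bD_i' - \bD_i)\bw = {\rm tr}\big((\bD_i'-\bD_i)\,{\rm diag}(\bw\bw^H)\big)$ so that the ${\rm diag}(\cdot)$ wrapper in the stated subgradient appears naturally rather than being inserted by hand.
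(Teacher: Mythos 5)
Your proposal is correct and follows essentially the same route as the paper's proof: restate $f(\bD_i)$ via strong duality as the DL problem with objective $\sum_u \bw_{i,u}^H \bD_i \bw_{i,u}$, use that the $\bD_i$-optimal beamformers remain feasible (and hence suboptimal) for the $\bD_i'$ problem to get the one-sided bound, and rewrite the resulting quadratic form as a trace against ${\rm diag}\big(\sum_u \bw_{i,u}\bw_{i,u}^H\big)$. The only difference is that you make explicit the justification the paper leaves implicit, namely that the SINR feasible set does not depend on $\bD_i$, which is a worthwhile clarification.
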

We finally plug the subproblem into the entire DL problem in \eqref{eq:dual_problem}. 
Recall that the main target is to maximize $f(\bD_i)$ while satisfying the constraints on $\bD_i$. 
After obtaining a converged solution of \eqref{eq:subproblem} for a fixed $\bD_i$, we take a step in the direction of a positive subgradient. We further project the updated $\bD_i$ onto the feasible set on $\bD_i$ in \eqref{eq:D_const1} and \eqref{eq:D_const2} because the updated $\bD_i$ probably violates the feasible domain.
The complete algorithm is summarized in Algorithm 1.

\begin{algorithm}[t]
	\caption{Joint Power-Minimizing Transmission with Per-Antenna Constraints
	}
		 Initialize $\lambda_{i,u}$, $\forall i,u$ and $\bD_{i}^{(0)}$, $\forall i$.\\
		\While{$\bD_{i}^{(n)}$'s do not converge}{
		Iteratively update $\lambda_{i,u}$ until converges  as
		\begin{align}
			\nonumber
			\lambda_{i,u} = 
			 \frac{1}{\alpha \Big(1+\frac{1}{\gamma_{i,u}}\Big)\bh_{i,i,u}^H \left[\bK_{i}^{(n)}({\pmb\Lambda})\right]^{-1} \bh_{i,i,u}}, \nonumber
		\end{align}
		for all $i,u$ where $\bK_{i}^{(n)}({\pmb\Lambda})$ is updated according to \eqref{eq:K_matrix} using $\lambda_{i,u}$ and $\bD_i^{(n)}$.\\  
		Find the UL MMSE equalizer ${\bf f}_{i,u}$ in \eqref{eq:mmse} with $\bD_i^{(n)}$ and the converged $\lambda_{i,u}$\\
		Compute the DL precoder $\bw_{i,u}$ from Corollary~\ref{cor:dl_precoder_ofdm}.\\
		\For{$i=1$ to $N_c$}{
		$\bD_{i}^{(n+1)} \!\gets\!  \bD_{i}^{(n)} \!+\! \eta\; {\rm diag}\left(\sum_{u}\bw_{i,u}\bw_{i,u}^H\right)$.\\
		Project $\bD_{i}^{(n+1)}$ onto the feasible set \eqref{eq:D_const1}-\eqref{eq:D_const2} until converges as\\
		$\bD_{i}^{(n+1)} \!\gets\!  \bD_{i}^{(n+1)} \!\!-\! \frac{\rm max \left(0,{\rm tr}\left(\bD_{i}^{(n+1)}\right)-N_b\right)}{\left\|{\bf 1}_{N_b}\right\|^2}{\bf 1}_{N_b}$}
    	$n \gets n+1$
		}
 \Return{\ }$\bw_{i,u}$ for all $i,u$.
\end{algorithm}

\section{Simulation Results}

We evaluate the derived results of the proposed
quantization-aware iterative CoMP algorithm with per-antenna constraints (Q-iCoMP-PA) against
the quantization-aware iterative CoMP algorithm (Q-iCoMP) in \cite{choi2020quantized}. 
The former is based on the antenna power minimax problem while the latter focuses on the total transmit power minimization problem.

Each BS is in the center of own hexagonal cell and BSs operate beside each other.
We assume that the small scale fading of each channel follows Rayleigh fading with zero mean and unit variance. 
For large scale fading, we use the log-distance pathloss model in \cite{erceg1999empirically}. 
The distance between adjacent BSs is $2\  \rm km$.
The minimum distance between BS and user is $100\  \rm m$.
Considering a $2.4 \ \rm GHz$ carrier frequency with $10\  \rm MHz$ bandwidth, we use $8.7\  \rm dB$ lognormal shadowing variance and $5 \ \rm dB$ noise figure.
For simplicity, we assume an equal target SINR for all users, i.e., $\gamma_{i,u}=\gamma$ for all $i,u$.
\begin{figure}[!t]\centering
	\includegraphics[width=1.0\columnwidth]{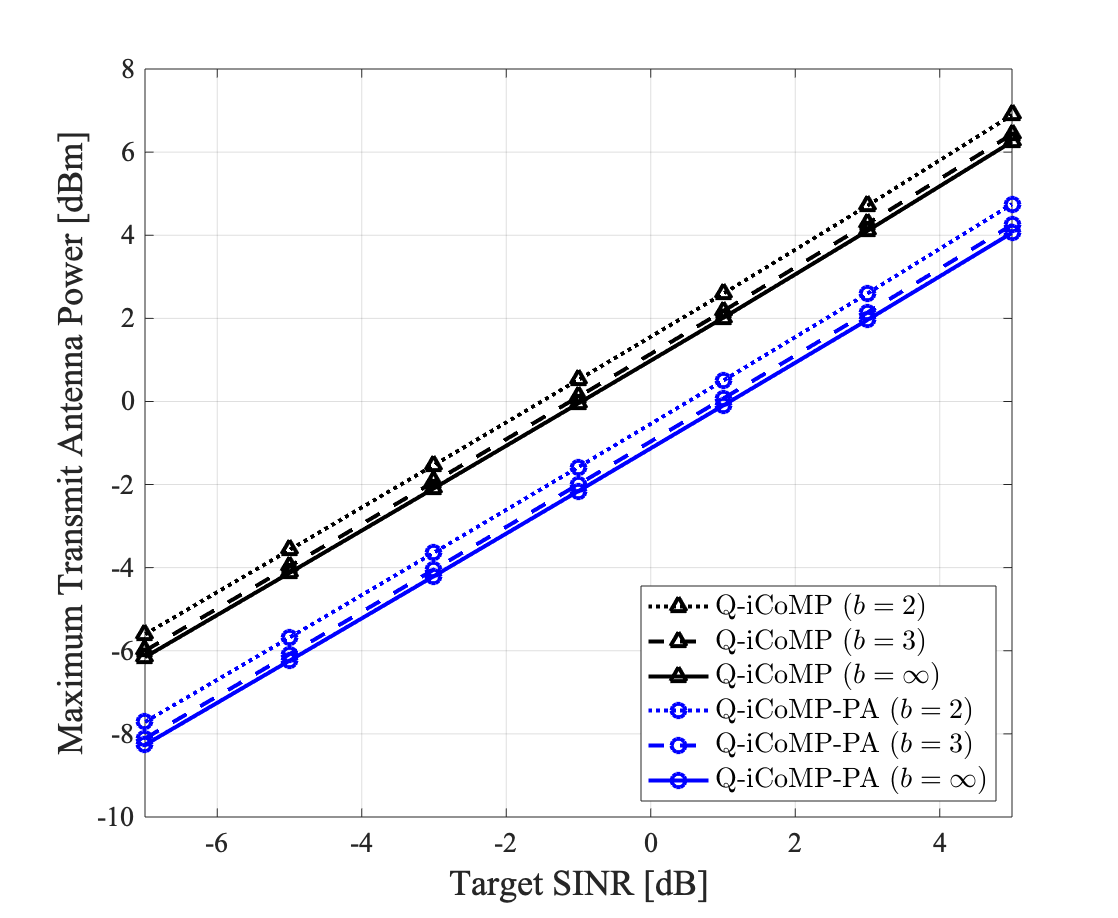}
	\vspace{-1.0em}
	\caption{Maximum transmit antenna power vs. target SINR for the network with $N_b = 32$ antennas per BS, $N_c = 4$ cells,  $N_u = 2$ users per cell, and $b \in \{2,3,\infty\}$ bits using 20 channel realizations per target SINR value.} 
	\label{fig:power}
	\vspace{-1.0em}
\end{figure}

Fig.~\ref{fig:power} shows maximum transmit antenna power in DL direction across all $N_c N_b$ transmit antennas for given target SINRs. 
We consider a communication configuration with $N_b=32$ BS antennas, $N_c = 4$ cells, and $N_u=2$ users per cell.
We test both infinite-resolution and low-resolution converters, i.e., $b\!\in\!\{2,3,\infty\}$.
When using infinite-resolution ADCs and DACs, we have slightly lower peak power compared to one with 3-bit data converters, however the gap between $b=2$, $b=3$, and $b=\infty$ cases is marginal on both Q-iCoMP and Q-iCoMP-PA by properly incorporating the coarse quantization error into the design of beamformers.
With multi-cell coordination, both Q-iCoMP and Q-iCoMP-PA do not suffer from implausible power consumption and undesirable divergence.
However, based on the primal problem of Q-iCoMP-PA, the proposed algorithm can limit the maximum transmit power providing around 2 dB gain over the regular Q-iCoMP. 

\begin{figure}[!t]\centering
	\includegraphics[width=1\columnwidth]{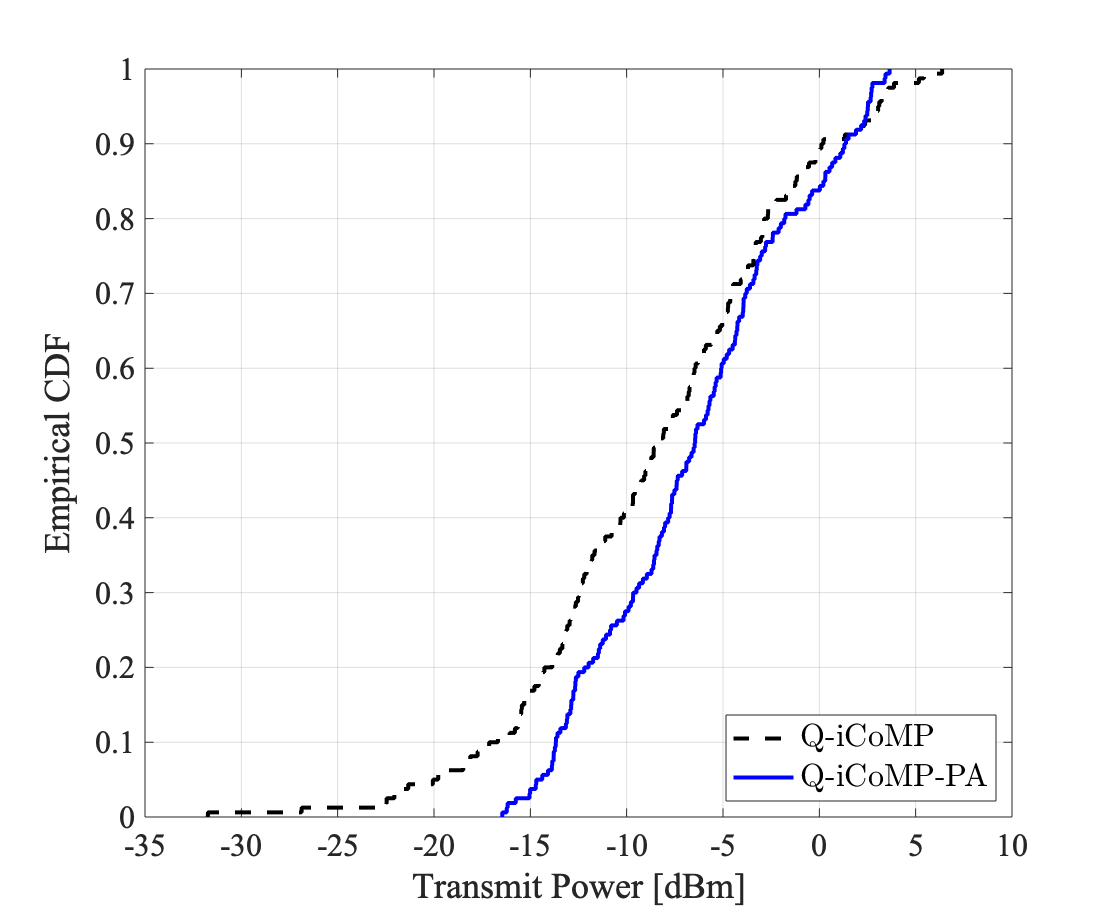}
	\vspace{-1em}
	\caption{
		Empirical CDF of the transmit power of the BS antennas in the network with $N_b = 32$ antennas per BS, $N_c = 5$ cells,  $N_u = 2$ users per cell, $b=3$ quantization bits, and $\gamma=2 \ \rm dB$ target SINR using one channel realization. Empirical CDF gives the probability of an antenna operating at or below the given transmit power.} 
	\label{fig:cdf}
	\vspace{-0.5em}
\end{figure}

Fig.~\ref{fig:cdf} shows the cumulative density function (CDF) of the transmit power of all antennas, i.e., $N_bN_c$ antennas, considering one channel realization. 
We employ a network with $N_b=32$ BS antennas, $N_c = 5$ cells, $N_u=2$ users per cell, and $b=3$ quantization bits.
From the figure, the proposed algorithm reveals two main advantages: 1) maximum transmit antenna power; and 2) operating range.
Since the CDF is plotted over all antennas, the rightmost point of the CDF represents the maximum transmit  power. 
When comparing the rightmost point of two methods, Q-iCoMP-PA achieves more than $2 \ \rm dB$ gain over Q-iCoMP, which corresponds to the main purpose of the proposed method.
Also, the CDF gives the operating range which is defined as the gap between the leftmost and rightmost points of the CDF.
Q-iCoMP-PA works with a much narrower operating range compared with Q-iCoMP, thereby increaseing efficiency of power-related components such as power amplifier.

\begin{table}[!t]
\caption{Comparison of Peak-to-average power ratio (PAPR).}
\begin{center}
\begin{tabular}{|c|c|c|}
\hline
 $\gamma$ & \multicolumn{2}{|c|}{{Methods}} \\
\cline{2-3}
 [dB] &  Q-iCoMP & Q-iCoMP-PA \\
\hline
2 & 3.81 dB & 2.16 dB \\
\hline
-3 & 3.77 dB & 2.09 dB \\
\hline
\end{tabular}
\\
\vspace{0.2cm}
$N_b = 32$ antennas per BS, $N_c = 4$ cells,  $N_u = 2$ users per cell, and $b=3$ bits. \\
\label{table:papr}
\end{center}
\end{table}

In Table.~\ref{table:papr}, we further simulate the peak-to-average power ratio (PAPR) which is directly related to the efficiency of power amplifiers. 
We consider a network configuration with $N_b = 16$ BS antennas, $N_c \in \{2,3\}$ cells,  $N_u = 2$ users per cell, and $b=3$ bits over different constraint SINRs.
With two cells in the network, the Q-iCoMP-PA achieves significant reduction over the Q-iCoMP, showing more than 1.6 {\rm dB} gain on average. 
Therefore, Q-iCoMP-PA is more favorable for mobile communication systems by properly providing multi-cell coordination and limiting the peak power of antennas.

\section{Conclusion}

In this paper, we investigated the CoMP solution for a multicell configuration with low-resolution data converters when employing per-antenna power constraints for more practical deployment.
Considering the coarse quantization error, we derived the antenna power minimax problem and effective UL problem with uncertain noise covariance as dual problem, and further proved zero duality gap between two problems.
Leveraging strong duality, we proposed the iterative algorithm that finds the optimal dual solution for a fixed covariance and used the solution to compute the optimal DL beamformer.
We further update the UL noise covariance using the optimal DL solution with the projected subgradient descent method.
In simulation, the proposed Q-iCoMP-PA achieves significant gain over Q-iCoMP in terms of maximum antenna power, operating range, and PAPR, thereby improving hardware efficiency.
Therefore, we can emphasize the need to limit the antenna power when deploying multicell massive MIMO communication systems with low-resolution data converters.

\bibliographystyle{IEEEtran}
\bibliography{CoMP_ADCs.bib}

\end{document}